\date{}
\newcommand{\be}{\begin{equation}}
\newcommand{\ee}{\end{equation}}
\def\R{\mathbb{R}}
\def\C{\mathbb{C}}
\def\O{\mathcal{O}}
\renewcommand{\Re}{\text{{\rm Re}\;}}
\renewcommand{\Im}{\text{{\rm Im}\;}}
\newtheorem{theorem}{Theorem}[section]
\newtheorem{lemma}[theorem]{Lemma}
\newtheorem{proposition}[theorem]{Proposition}
\theoremstyle{definition}
\newtheorem{remark}[theorem]{Remark}
\numberwithin{equation}{section}
\title[Widths of highly excited shape resonances]{Widths of highly excited shape resonances}
\author[Marzia Dalla Venezia \&
Andr\'e MARTINEZ]{
Marzia Dalla Venezia${}^1$ \&
Andr\'e MARTINEZ$ {}^2$
 }
\begin{document}

\maketitle
\addtocounter{footnote}{1}
\footnotetext{{\tt\small  Universit\`a di Bologna,
Dipartimento di Matematica, Piazza di Porta San Donato, 40127
Bologna, Italy,
marzia.dallavenezia3@unibo.it } }
\addtocounter{footnote}{1}
\footnotetext{{\tt\small Universit\`a di Bologna,
Dipartimento di Matematica, Piazza di Porta San Donato, 40127
Bologna, Italy,
andre.martinez@unibo.it }}
\begin{abstract}
We study the widths of shape resonances for the semiclassical multi-dimensional Schr\"odinger operator, in the case where the frequency remains close to some value strictly larger than the bottom of the well. Under a condition on the behavior of the resonant state inside the well, we obtain an optimal lower bound for the widths.
\end{abstract}
\vskip 4cm
{\it Keywords:} Resonances; life-time; semiclassical.
\vskip 0.5cm
{\it Subject classifications:} 35P15; 35C20; 35S99; 47A75.

\baselineskip = 18pt
\vfill\eject
\section{Introduction}

The study of shape resonances is a rather old subject in semiclassical analysis, and since the years 80's many mathematical works has been done in order to both locate them and estimate their widths (see, e.g., \cite{AsHa, CDKS, HeSj2, HiSi, FLM} and references therein). In particular, one should mention the work \cite{CDKS}, where the existence of shape resonances exponentially close to the real axis is proved, and the work \cite{HeSj2}, where a more refined analysis leads to optimal estimates on the widths of resonances that are near a local minimum of the potential.  For more excited shape resonances, however, only lower bounds on their widths are available in general, except for the one-dimensional case where the exact asymptotic behavior can be determined : see \cite{Se}.

As it is well known, the physical interest of such studies relies on the fact that  the widths of the resonances rare directly related to the life-time of metastable quantum states.

The purpose of this work is to extend some of the results of \cite{Se} to the multidimensional case.

More precisely, considering the semiclassical Schr\"odinger operator $P := -h^2\Delta +V(x)$ on $L^2(\R^n)$ with $n\geq 1$,
we plan to produce optimal exponential estimates on the widths of highly excited shape resonances, that is, shape resonances that tend to an energy $E_0$ greater than the local minimal of the potential $V$. In contrast with \cite{Se}, here we assume that the potential well (that is, the bounded component $U$ of $\{V\leq E_0\}$)  is connected, excluding the situation of possible interacting wells. In this situation, the general multidimensional result says that any resonance $\rho =\rho (h)$ that tends to $E_0$ as $h\to 0_+$ is such that, for any $\varepsilon >0$, one has,
$$
|\Im \rho | \leq \O (e^{-(2S_0-\varepsilon)/h})
$$
uniformly as $h\to 0_+$. Here, $S_0>0$ is the Agmon distance (that is, the degenerate distance associated with the pseud-metric $\max (V-E_0, 0)dx^2$) between $U$ and the unbounded component ${\mathcal M}$ of $\{V\leq E_0\}$.

In other words $\rho$ satisfies,
\be
\label{upperbn}
\limsup_{h\to 0_+}\, h\ln|\Im\rho| \leq -2S_0.
\ee
When $n=1$, this result is improved into (see \cite{Se}, Theorem 0.2),
\be
\label{equivn1}
\lim_{h\to 0_+}\, h\ln|\Im\rho| = -2S_0.
\ee
Here we plan to extend this improvement to the multidimensional case. Because of (\ref{upperbn}), all we need to prove is that, for any $\varepsilon >0$, there exists a constant $C_\varepsilon >0$ such that,
\be
|\Im \rho | \geq \frac1{C_\varepsilon}e^{-(2S_0+\varepsilon)/h},
\ee
for all $h>0$ small enough.

In order to produce such a  good lower bound, when $n\geq 2$ it is necessary to add an assumption on the size of the resonant state inside $U$. This assumption is actually implied by a geometric condition on the classical Hamilton flow above $U$ (see Remark \ref{geomassump}) that is automatically satisfied in the one-dimensional case. Roughly speaking, this condition says that the energy shell $\Sigma_{E_0}:=\{(x,\xi)\in \R^{2n}\, ;\, \xi^2+V(x)=E_0\}$ is sufficiently well covered by the classical Hamilton flow, in the sense that any open set intersecting $\Sigma_{E_0}$ is flowed over a whole neighborhood of $\Sigma_{E_0}$ (this can be understood as a kind of ergodicity of the flow on $\Sigma_{E_0}$).

From a technical point of view, this problem is very close to that of estimating the tunneling for a symmetric double-wells at high excited energies, as considered e.g. in \cite{Ma1} (and indeed, part of our argument will use the results of \cite{Ma1}). However, an additional difficulty comes from the fact that here, the quantity we have to study mainly involves the size of the resonant state in  ${\mathcal M}$. This means that our work will essentially consist in connecting the size of the resonant state in ${\mathcal M}$ to that in $U$, through the barrier ${\mathcal B}:=\{ V >E_0\}$. The results of \cite{Ma1} permits us to connect the size of the state in  ${\mathcal B}$ to that in $U$ only, but not its size in ${\mathcal M}$ to that in ${\mathcal B}$. Indeed, it appears that the argument of \cite{Ma1} (which is typically an argument of propagation of microlocal analyticity) does not seem easy to adapt for this last step. However, following an idea already present in \cite{DGM}, one can develop some explicit Carleman-type inequalities that permits us to cross the border between  ${\mathcal M}$ and ${\mathcal B}$, and to conclude.

\section{Notations and assumptions}

We study the spectral properties near energy 0 of the semiclassical Schr\"odinger operator,
$$
P := -h^2\Delta +V(x) \,
$$
on $L^2(\R^n)$,
where $x=(x_1,\dots ,x_n)$ is the current variable in $\R^n$ ($n\geq 1$),
$h>0$ denotes the semiclassical parameter, and $V$ represents the potential energy.

We assume,

{\bf Assumption 1.} { \it The potential $V$ is smooth and
bounded on $\R^n$,   and it
satisfies,
\begin{itemize}
\item
$\{V\leq 0\}=U\cup {\mathcal M}$ where $U$ is compact and connected, and $U\cap{\mathcal M}=\emptyset$;
\item $V $ has a strictly negative limit $ -L$ as $\vert x\vert \rightarrow \infty$.
\end{itemize}
}

This typically describes the situation where so-called shape resonances appear. In order to be able to define such resonances, we also assume,

{\bf Assumption 2.}\label{Ass2} {\it The potential $V$ extends to a bounded holomorphic functions near a complex sector of the form,
${\mathcal S}_{\delta} :=\{x\in \C^n\, ;\,  \vert \Im x\vert \leq \delta |\Re x| \}$, with $\delta >0$. Moreover  $V$ tends to its limit at $\infty$ in this sector.}

We also assume,

{\bf Assumption 3.} {\it
$E=0$  is a non-trapping energy for $V$ above ${\mathcal M}$.}

The fact that 0 is a non-trapping energy for $V$ above ${\mathcal M}$ means that, for any $(x,\xi)\in p^{-1}(0)$ with $x\in{\mathcal M}$, one has $|\exp tH_{p}(x,\xi)|\rightarrow +\infty$ as $t\rightarrow \infty$, where $p(x,\xi):=\xi^2+V(x)$ is the symbol of $P$, and $H_{p}:=(\nabla_\xi p, -\nabla_x p)$ is the Hamilton field of $p$. It is equivalent to the existence of a function $F\in C^\infty (\R^{2n};\R)$, supported near $\{ p=0\}\cap \{ x\in{\mathcal M}\}$, that satisfies,
\be
\label{fuite}
H_{p}F(x,\xi)>0 \,  \mbox{ on } \{p=0\}.
\ee
It also implies that ${\mathcal M}$
has a smooth boundary on which $\nabla V\not=0$.

For our purpose, we will also need an additional geometric assumption (that we believe to be generic).

We denote by $d_V$ the so-called Agmon distance associated with $V$, that is, the pseudo-distance associated with the pseudo-metric $\max (0, V)dx^2$. We also denote by $G$ the set of all minimal geodesics (relatively to the Agmon distance $d_V$) between $U$ and ${\mathcal M}$  that meet each boundary $\partial U$ and $\partial {\mathcal M}$ at one point only. We assume,

{\bf Assumption 4.} {\it $G$ is a finite set.}

Note that this assumption is probably purely technical only, and can hopefully be removed by refining our construction and by using a convenient partition of unity.

In the rest of the paper, we set,
$$
S_0:= d_V(U, {\mathcal M}).
$$
Thanks to our assumptions, we necessarily have $S_0>0$.

\section{Resonances}
\label{secRes}

In the previous situation, the essential spectrum of $P$ is $[-L, +\infty)$.
The resonances of $P$ can be defined by using a complex distortion in the following way (see, e.g., \cite{Hu}): Let $f \in C^\infty (\R^n, \R^n)$ such that $f(x) = x$ for $\vert x\vert$ large enough. For $\theta\not=0$ small enough, we define the distorted operator $P_{\theta}$ as the value at $ \nu = i \theta$ of  the extension to the complex of the  operator  $U_\nu P U_\nu^{-1}$ which is  defined for $\nu$ real, and analytic in $\nu$ for $\nu$ small enough, where we have set,
\be
\label{distorsion}
U_\nu \phi(x) := \det ( 1 + \nu df(x))^{1/2} \phi ( x + \nu f(x)).
\ee
By using the Weyl Perturbation Theorem, one can also see  that the essential spectrum of $P_\theta$ is given by,
$$
\sigma_{ess}(P_\theta )= e^{-2i\theta}\R_+.
$$
It is also well known that, when $\theta$ is positive, the discrete spectrum of $P_\theta$ satisfies,
$$
\sigma_{disc}(P_\theta)\subset \{ \Im z\leq 0\}.
$$
Then, those eigenvalues of $P_\theta$ that are located in the complex sector $\{ \Re z>0\, ;\,-2\theta<  {\rm arg}z \leq 0\}$
are called the resonances of $P$ \cite{Hu, HeSj2, HeMa}, they form a set denoted by ${\rm Res} (P)$  (on the contrary, when $\theta<0$, the eigenvalues of $P_\theta$ are just the complex conjugates of the resonances of $P$, and are called anti-resonances).

If $\rho$ is a resonance, the quantity $|\Im\rho|$ is called the width of $\rho$, and its physical importance comes form the fact that its inverse $|\Im\rho|^{-1}$ corresponds to the life-time of the corresponding resonant state.

Let us observe that the resonances of $P$ can also be viewed as the poles of the meromorphic extension, from $\{\Im z >0\}$, of some matrix elements of the resolvent $R(z):=(P-z)^{-1}$ (see, e.g., \cite{ReSi, HeMa}).

It is proved in \cite{HeSj1, HeSj2} that, in this situation, the resonances of $P$ near 0 are close to the eigenvalues of the operator
\be
\label{merbouche}
\widetilde P:= -h^2\Delta + \widetilde V
\ee
where $ \widetilde V\in C^\infty (\R^n; \R)$ coincides with $V$ in $\{ {\rm dis}(x,M)\geq \delta\}$ ($\delta >0$ is fixed arbitrarily small), and is such that $\inf_{\{ {\rm dis}(x,M)\leq \delta\}} \widetilde V >0$. The precise statement is the following one : Let $I(h)$ be a closed interval  containing 0, and $a(h)>0$ such that $a(h)\to 0$ as $h\to 0_+$, and, for all $\varepsilon >0$ there exists $C_\varepsilon >0$ satisfying,
\be
\label{anonexp}
a(h)\geq \frac1{C_\varepsilon} e^{-\varepsilon /h};
\ee
\be
\label{gap1}
\sigma (\widetilde P)\cap \left( (I(h)+[-2a(h), 2a(h)])\backslash I(h)\right) =\emptyset,
\ee
for all $h>0$ small enough. Then, there exists a constant $\varepsilon_1>0$ and a bijection,
$$
\widetilde\beta \, : \, \sigma (\widetilde P)\cap I(h)\,  \to \, {\rm Res} (P)\cap \Gamma (h),
$$
where we have set,
$$
\Gamma (h):= (I(h)+[-a(h), a(h))+i[-\varepsilon_1, 0],
$$
such that, for any $\varepsilon >0$, one has,
\be
\label{bij}
\widetilde\beta (\lambda) -\lambda ={\mathcal O}(e^{-(2S_0-\varepsilon)/h}),
\ee
uniformly as $h\to 0_+$.

In particular, since the eigenvalues of $\widetilde P$ are real, one obtains that, for any $\varepsilon >0$, the resonances $\rho$ in $\Gamma (h)$ satisfy,
\be
\label{upperbd}
\Im \rho ={\mathcal O}(e^{-(2S_0-\varepsilon)/h}).
\ee

From now on, we consider the particular case where $I(h)$ consists of a unique value $E(h)$, such that,
\be
\label{gap2}
\begin{aligned}
& E(h)\in \sigma_{disc} (\widetilde P);\\
& E(h)\to 0 \mbox{ as } h\to 0_+;\\
& \sigma(\widetilde P)\cap [E(h)-2a(h), E(h)+2a(h)] =\{ E(h)\}, \\
& \mbox{ where } a(h) \mbox{ satisfies (\ref{anonexp})}.
\end{aligned}
\ee

We denote by $u_0$ the normalised eigenstate of $\widetilde P$ associated with $E(h)$, and,
applying (\ref{bij}), we also denote by $\rho=\rho (h)$ the unique resonance of $P$ such that $\rho - E(h)={\mathcal O}(e^{-(2S_0-\varepsilon)/h})$.

The purpose of this paper is  to obtain a lower bound on the width $|\Im\rho|$, possibly of the same order of magnitude as the upper bound.

\section{Main Result}
\label{secmainth}

Following the ideas of \cite{Ma1}, we consider the following additional assumption of non degeneracy,

{\bf Assumption [ND]} For all $\varepsilon >0$ and for all neighborhood $W$ of the set $\bigcup_{\gamma\in G}(\gamma \cap \partial U)$, there exists $C=C(\varepsilon, W)>0$ such that, for all $h>0$ small enough, one has,
$$
\| u_0\|_{L^2(W)}\geq \frac1{C}e^{-\varepsilon /h}.
$$
Our main result is,
\begin{theorem}\sl
\label{mainth}
Suppose Assumptions 1-3, (\ref{gap2}), and Assumption [ND]. Then, for any $\varepsilon >0$, there exists $C(\varepsilon)>0$ such that, for all $h>0$ small enough, one has,
\be
\label{lowerbd}
|\Im \rho (h)| \geq \frac1{C(\varepsilon)}e^{-(2S_0+\varepsilon)/h}.
\ee
 \end{theorem}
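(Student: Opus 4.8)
The plan is to estimate $|\Im\rho|$ from below by the standard Green's-formula representation of the resonance width as a flux through a hypersurface separating $U$ from ${\mathcal M}$, and then to bound that flux from below using a lower bound on the resonant state deep inside ${\mathcal M}$, which in turn is obtained by propagating a lower bound from $U$. More precisely, let $v$ be the distorted resonant eigenfunction $P_\theta v=\rho v$ (equivalently, the Helffer–Sjöstrand construction realising $\rho$), normalised so that it agrees with $u_0$ (up to $\O(e^{-(2S_0-\varepsilon)/h})$) in a fixed neighbourhood of $U$; this is legitimate by \eqref{bij} and the fact that $u_0$ is concentrated in $U$. One has the well-known identity $|\Im\rho|\,\|v\|_{L^2(\Omega)}^2 = h^2\,\Im\!\int_{\partial\Omega}\bar v\,\partial_\nu v\,dS$ for a suitable bounded region $\Omega\subset\{f=0\}\cap\{V<0\ \text{near}\ {\mathcal M}\}$ containing $U$ and part of ${\mathcal M}$, so that on $\partial\Omega$ the operator is still $-h^2\Delta+V$ (no distortion), and estimating the boundary flux by $h$ times local $L^2$-norms of $v$ and $\nabla v$ near $\partial\Omega$ reduces the whole problem to producing a lower bound of the form $\|v\|_{L^2(K)}\geq C_\varepsilon^{-1}e^{-(S_0+\varepsilon)/h}$ on a fixed compact $K\subset{\mathcal M}$, together with a matching (elliptic-type) upper bound $h\|v\|_{H^1}=\O(1)$ on the total mass, which is standard.

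The lower bound in ${\mathcal M}$ is obtained in two steps. First, using Assumption [ND] together with the Agmon-type exponential decay estimates for $u_0$ (equivalently for $v$) in the barrier ${\mathcal B}=\{V>0\}$, one imports from \cite{Ma1} a lower bound on the microlocal size of $v$ in the barrier just outside $\partial U$, and, more importantly, along the set of minimal geodesics $G$: the propagation-of-analyticity argument of \cite{Ma1} yields that $v$ cannot be smaller than $e^{-(S_0+\varepsilon)/h}$ on a neighbourhood of the points $\gamma\cap\partial{\mathcal M}$, $\gamma\in G$ — since $S_0=d_V(U,{\mathcal M})$ is exactly the Agmon weight accumulated along such a geodesic, the [ND] lower bound at $\gamma\cap\partial U$ propagates to a $e^{-(S_0+\varepsilon)/h}$ lower bound at the other endpoint. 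Second — and this is the step the introduction flags as genuinely new — one must cross the free boundary $\partial{\mathcal M}$ between ${\mathcal B}$ and ${\mathcal M}$: there $V$ changes sign, so the Agmon metric degenerates and the microlocal-analyticity machinery of \cite{Ma1} is not available. Here I would follow the idea of \cite{DGM} and build an explicit Carleman weight $\varphi$ adapted to a normal coordinate across $\partial{\mathcal M}$ (using $\nabla V\neq0$ on $\partial{\mathcal M}$, from Assumption 3), with $\varphi$ convex in the normal direction and flat tangentially, yielding a Carleman estimate of the form $h\|e^{\varphi/h}w\|_{L^2}+\cdots\le C\|e^{\varphi/h}(P-\rho)w\|_{L^2}$ on a thin two-sided collar of $\partial{\mathcal M}$; applying it to $w=\chi v$ with $\chi$ a cutoff supported in the collar and using the lower bound on the ${\mathcal B}$-side of the collar (just obtained) together with the smallness of the commutator $[P,\chi]v$ in the interior, one transfers the lower bound to the ${\mathcal M}$-side, i.e. to a fixed compact $K\subset{\mathcal M}$; the non-trapping hypothesis (Assumption 3, via \eqref{fuite}) then propagates it into the interior if needed, and in any case prevents $v$ from decaying further before one reaches $\partial\Omega$.

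The main obstacle is precisely this last transmission step across $\partial{\mathcal M}$. One has to choose the Carleman weight so that (i) the estimate is genuinely two-sided — it must see the nonnegative-$V$ region and the negative-$V$ region simultaneously, which forces careful control of the subprincipal/$h$-order terms and of the region where the principal symbol of the conjugated operator is characteristic — and (ii) the exponential loss incurred in the weight $\varphi$ across the (thin) collar can be made $\le\varepsilon/h$, so that the final bound is $e^{-(S_0+\varepsilon)/h}$ with the \emph{same} $S_0$; since the collar can be taken arbitrarily thin and $V$ is $O(1)$ there, the accumulated Agmon weight across it is $O(\text{width})$ and hence negligible, but making this quantitative while keeping the Carleman estimate valid (the weight must still satisfy Hörmander's sub-ellipticity/pseudoconvexity condition uniformly in $h$) is the delicate point. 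Once the lower bound $\|v\|_{L^2(K)}\geq C_\varepsilon^{-1}e^{-(S_0+\varepsilon)/h}$ is established, squaring it and feeding it into the flux identity gives $|\Im\rho|\gtrsim h^{?}\,e^{-(2S_0+2\varepsilon)/h}$, and absorbing the polynomial factor in $h$ into the exponential (replacing $2\varepsilon$ by $\varepsilon$) yields \eqref{lowerbd}.
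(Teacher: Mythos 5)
Your overall architecture matches the paper's: the Green/Stokes flux identity through a hypersurface in ${\mathcal M}$, a Carleman estimate in the spirit of \cite{DGM} to cross $\partial{\mathcal M}$, the machinery of \cite{Ma1} inside the barrier, and Assumption [ND] as the anchor. (One organisational difference: the paper argues by contradiction, assuming $|\Im\rho|=\O(e^{-2(S_0+\varepsilon_1)/h})$ and propagating \emph{smallness} backwards from $\partial\Omega\subset{\mathcal M}$ across $\partial{\mathcal M}$ and along the geodesics to $\partial U$, where it contradicts [ND]; this is the natural direction because both the microlocal-analyticity results of \cite{Ma1} and the Carleman inequality propagate upper bounds, not lower bounds. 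Your forward-propagation phrasing is the contrapositive and would in practice have to be run the same way.)

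There is, however, a genuine gap at the very first reduction. You write that ``estimating the boundary flux by $h$ times local $L^2$-norms of $v$ and $\nabla v$ near $\partial\Omega$ reduces the whole problem to producing a lower bound'' on $\|v\|_{L^2(K)}$. That estimate only bounds the flux from \emph{above}; a lower bound on $\|u\|$ near $\partial\Omega$ gives nothing for $|\Im\rho|$ unless one also shows that $-h^2\Im\int_{\partial\Omega}\partial_\nu u\,\overline u\,ds$ has a definite sign and is itself bounded below by $\frac{h}{C}\|u\|^2_{L^2(\partial\Omega)}$ (up to $\O(e^{-(2S_0+\varepsilon_0)/h})$); a priori the imaginary part could cancel. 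Establishing this positivity is where the paper invests most of its technical effort: an a priori estimate placing $e^{-i\psi/h}u$ in a Sj\"ostrand space $H_{-S_0+\varepsilon_1|\Im x|,z_0}$, the outgoing property $MS(ue^{S_0/h})\subset\Lambda_+$, the factorisation $P_\psi-\rho=(hD_{x_n}+\partial_{x_n}\psi+A)(hD_{x_n}+\partial_{x_n}\psi-A)$ with $A$ elliptic of positive principal symbol, and the positivity $A=B^*B+\O(e^{-(S_0+\varepsilon)/h})$. Your proposal does not address this step at all, and without it the flux identity cannot be converted into \eqref{lowerbd}. A secondary imprecision: in the Carleman step the loss on the barrier side of the collar is not ``$O(\text{width})$'' of Agmon weight but comes from the caustic degeneracy $d_V(U,x)\geq S_0-c_0|x_n|^{3/2}$, so the state may a priori be \emph{larger} than $e^{-S_0/h}$ by $e^{c_0\delta^{3/2}/h}$ there; the argument closes only because the linear gain $\alpha\delta$ of the weight beats the $\delta^{3/2}$ loss for $\delta$ small, which is the quantitative point your sketch leaves open.
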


 \begin{remark}\sl
 In view of (\ref{upperbd}), this lower bound is optimal. Indeed, a consequence of (\ref{upperbd}) and (\ref{lowerbd}) is the following identity:
 \be
 \lim_{h\to 0_+} h\ln |\Im \rho| = -2S_0
 \ee
 \end{remark}
 \begin{remark}\sl
 As in \cite{Ma1}, it can be shown that Assumption [ND] is indeed necessary to have (\ref{lowerbd}), and examples where Assumption [ND] is not satisfied can be constructed.
  \end{remark}
   \begin{remark}\sl
   \label{geomassump}
Assumption [ND] is always satisfied in the one dimensional case. When $n\geq 2$, thanks to standard properties of propagation of the microsupport (see, e.g., \cite{Ma2}), a sufficient condition to have Assumption [ND] is the following geometrical one (see also \cite{Ma1}): For any neighborhood $W$ in $\R^{2n}$ of $\bigcup_{\gamma\in G}(\gamma \cap \partial U)\times \{0\}$, the set $\bigcup_{t\in\R} \exp tH_p(W)$ is a neighborhood of $\Sigma_0:=\{\xi^2 +V(x)=0\, ,\, x\in U\}$. Obviously, an equivalent simpler formulation is: For any open set $W$ intersecting $\Sigma_0$,  $\bigcup_{t\in\R} \exp tH_p(W)$ is a neighborhood of $\Sigma_0$.
 \end{remark}

 \section{Reduction to an estimate in ${\mathcal M}$}
 From now on, we denote by $u$ the resonant state of $P$ associated with the resonance $\rho$, and normalised in such a way that,
 \be
 \| u\|_{L^2(\ddot O)} =1,
 \ee
 where $\ddot O:= \R^n\backslash {\mathcal M}$. Then, it is well known (see, e.g., \cite{HeSj2}) that for any bounded set ${\mathcal B}\subset {\mathcal M}$, and for any $\varepsilon >0$, one has,
 \be
 \label{estumer1}
  \| u\|_{L^2({\mathcal B})} ={\mathcal O}(e^{-(S_0-\varepsilon)/h}).
 \ee
 In addition, if we set,
 $$
 {\mathcal T}_1:= \bigcup_{\gamma\in G}(\gamma\cap \partial{\mathcal M})
 $$
 (the so-called set of ``points of type 1'' in the terminology of \cite{HeSj2}), and
 if ${\mathcal B}$ stays away from the set,
 $$
{\mathcal A}:=  \Pi_x \left(\bigcup_{t\in\R}\exp tH_p\left( {\mathcal T}_1\times \{0\}\right)\right),
$$
(where $\Pi_x$ stands for the natural projection $(x,\xi)\mapsto x$, and $H_p:=(\partial_\xi p, -\partial_xp)$ is the Hamilton field of $p(x,\xi ):=\xi^2 +V(x)$), then there exists $\varepsilon_0>0$ and a neighborhood ${\mathcal B}'$ of ${\mathcal B}$ such that,
 \be
  \label{estumer2}
  \| u\|_{L^2({\mathcal B}')} ={\mathcal O}(e^{-(S_0+\varepsilon_0)/h}).
 \ee
On the other hand, performing Stokes formula on an open domain $\Omega\supset \ddot O$, we see as in \cite{HeSj2}, Formula (10.65), that one has,
 \be
\label{HeSj1065} (\Im \rho )\| u\|_{L^2(\Omega)}^2 = -h^2\Im \int_{\partial\Omega} \frac{\partial u}{\partial\nu}\overline{u} ds,
 \ee
where $ds$ is the surface measure on $\partial \Omega$, and $\nu$ stands for the outward pointing unit normal to $\Omega$. Using (\ref{estumer1})-(\ref{HeSj1065}), we deduce that, for some $\varepsilon_0' >0$, one has,
\be
\label{green}
\Im \rho =-h^2\Im \int_{\partial\Omega\cap {\mathcal A}'} \frac{\partial u}{\partial\nu}\overline{u} ds+{\mathcal O}(e^{-(2S_0 +\varepsilon_0')/h}),
\ee
where ${\mathcal A}'$ is an arbitrarily small neighborhood of ${\mathcal A}$.

In order to transform this expression into a more practical one, we plan to use the analytic pseudofifferential calculus introduced in \cite{Sj}. For this purpose, we first have to prove some a priori estimate on $u$ near ${\mathcal A}$.

So, let $z_1\in {\mathcal T}_1$, let $W_1$ be a neighborhood of $z_1$ in $\partial{\mathcal M}$, and for $t_0>0$ sufficiently small, consider the two Lagrangian manifolds,
$$
\Lambda_\pm := \bigcup_{0<\pm t<2t_0} \exp t H_p (W_1\times \{0\}) \,\,\quad  (\subset \{ p=0\}).
$$
(Note that they are Lagrangian because $W_1\times\{0\}$ is isotropic.)
Then, it is easy to check that $\Lambda_\pm$ projects bijectively on the base, and since
 $p(x,\xi)$ is an even function of $\xi$,  we see that they can be represented by an equation of the type,
 $$
 \Lambda_\pm \, :\, \xi =\pm \nabla \psi (x),
 $$
 where $\psi$ is a real-analytic function,  such that,
 \be
 \label{eikon}
 (\nabla\psi (x))^2 + V(x) =0.
\ee
Now, we set $z_0:=\Pi_x  \exp t_0 H_p(z_1,0)$,
and we still denote by $\psi$ an holomorphic extension of $\psi$ to a complex neighborhood of $z_0$. We have,
\begin{proposition}\sl
For any $\varepsilon_1 >0$, one has,
$$
e^{-i\psi/h}u \in H_{-S_0+\varepsilon_1|\Im x|, z_0},
$$
where $H_{-S_0+\varepsilon_1|\Im x|, z_0}$ is the Sj\"ostrand's space consisting of $h$-dependent holomorphic functions $v=v(x;h)$ defined near $z_0$, such that, for all $\varepsilon >0$,
$$
v(x,h)={\mathcal O}(e^{(-S_0+\varepsilon_1|\Im x|+\varepsilon)/h}),
$$
uniformly for $x$ close to $z_0$ and $h>0$ small enough.
\end{proposition}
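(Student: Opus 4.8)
The plan is to conjugate the equation by the (real) phase $\psi$, reduce everything to a microlocal statement along the piece of bicharacteristic joining $(z_1,0)$ to $(z_0,\nabla\psi(z_0))\in\Lambda_+$, and then combine an a priori size estimate with a microsupport localisation. Since $t_0$ is small, $z_0$ lies in a bounded portion of $\mathcal{M}$ where, by Assumption 2, the coefficients of $P$ are holomorphic and the complex distortion is trivial, so that $u$ is a genuine real-analytic solution of $(P-\rho)u=0$ near $z_0$ and all along that bicharacteristic. Setting $v:=e^{-i\psi/h}u$ and using the eikonal equation (\ref{eikon}), one finds that $v$ solves
\be
\label{conjeq}
\Bigpare{-h^2\Delta\,-\,2ih\nabla\psi\cdot\nabla\,-\,ih\Delta\psi\,-\,\rho}\,v=0
\ee
near $z_0$, an $h$-differential operator whose semiclassical characteristic set is $\{(\xi+\nabla\psi(x))^2+V(x)=0\}$; near $(z_0,0)$ this set contains the ``good'' sheet $\{\xi=0\}$ (the image of $\Lambda_+$ under the conjugation) together with other sheets, among them $\{\xi=-2\nabla\psi(x)\}$ (the image of $\Lambda_-$). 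Membership of $v$ in $H_{-S_0+\varepsilon_1|\Im x|,z_0}$ for every $\varepsilon_1>0$ is then equivalent to the conjunction of two facts: that $u$ has size $\O(e^{-(S_0-\varepsilon)/h})$ near $z_0$, and that the analytic microsupport of $v$ at $z_0$ is confined to the good sheet $\{\xi=0\}$.

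For the size I would invoke the a priori bound (\ref{estumer1}), which already gives $\norm{u}_{L^2}=\O(e^{-(S_0-\varepsilon)/h})$ on a small real ball around $z_0$ for every $\varepsilon>0$; that the weight is exactly $-S_0$ (and not better) is forced by the fact that $z_1\in\mathcal{T}_1$ is a point of type $1$, so that $d_V(z_1,U)=S_0$, and it is not improved along the bicharacteristic since the Agmon metric vanishes on $\mathcal{M}$. For the localisation of the microsupport, the idea is to follow $u$ from the well $U$, through the barrier $\{V>0\}$, and across $\partial\mathcal{M}$: the propagation of microlocal analyticity established in \cite{Ma1} controls $u$ on $\{V>0\}$ up to $\partial\mathcal{M}$ in the form ``$u=e^{i\phi_+/h}\times(\text{amplitude of size }1)$ microlocally'', where $\phi_+$ solves the eikonal equation in the barrier with $\Im\phi_+=S_0$ on $\bigcup_{\gamma\in G}(\gamma\cap\partial\mathcal{M})$; moreover $\Lambda_+$ is precisely the real continuation across the turning point $z_1$ of the complex Lagrangian $\xi=\nabla\phi_+(x)$, under which continuation $\phi_+$ becomes $\psi+iS_0$ (once $\psi(z_1)=0$ is fixed). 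Propagating this information across $\partial\mathcal{M}$ and then along the bicharacteristic inside $\mathcal{M}$ — using the non-trapping Assumption 3, the outgoing character of $u$ (which excludes the ``incoming'' branch $\Lambda_-$), and Sj\"ostrand's analytic pseudodifferential calculus \cite{Sj} — one reaches $z_0$ with $e^{-i\psi/h}u=e^{i(\phi_+-\psi)/h}\times(\text{amplitude})=e^{-S_0/h}\times(\text{amplitude})$ modulo lower order terms, which is the asserted membership; the loss $\varepsilon_1|\Im x|$ in the weight is the standard, and here harmless, price paid for the holomorphic fattening of an $h$-oscillatory object, absorbed by taking the germ at $z_0$ small enough.

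The step I expect to be the main obstacle is the crossing of the turning point $z_1\in\partial\mathcal{M}$, and concretely the control of the $\Lambda_-$ component of $u$ near $z_0$: such a component would contribute a factor $e^{-2i\psi/h}$ to $e^{-i\psi/h}u$, and since $\psi$ is real its holomorphic extension grows like $e^{2\sqrt{-V(z_0)}\,|\Im x|/h}$ with a rate $2\sqrt{-V(z_0)}$ that is \emph{not} small, so it cannot simply be tolerated; one has to show that this component carries an extra exponential smallness $\O(e^{-\delta/h})$ for some $\delta>0$, so that on a sufficiently small complex neighbourhood of $z_0$ it is dominated by $e^{(-S_0+\varepsilon_1|\Im x|)/h}$. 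This is exactly where the propagation results of \cite{Ma1}, available only on the $\{V>0\}$ side, do not suffice, and where — following the idea of \cite{DGM} — one should instead establish explicit Carleman-type inequalities across $\partial\mathcal{M}$ to perform the crossing. Once this is in place, the conjugation step and the propagation inside $\mathcal{M}$ away from the good sheet are routine.
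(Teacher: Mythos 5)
Your decomposition of the problem into (i) the real a priori size $u=\O(e^{-(S_0-\varepsilon)/h})$ near $z_0$ and (ii) a localisation of the analytic microsupport of $v=e^{-i\psi/h}u$ on $\{\xi=0\}$ does match the two inputs the paper actually uses. But the reduction you declare ``routine'' is in fact the whole technical content of the paper's proof: one writes $e^{S_0/h}v$ through Sj\"ostrand's FBI reproducing formula (\ref{repv}) and splits the $\xi$-integral into $|\xi|\geq C/h$ (handled by constructing an analytic parametrix $b$ for ${}^tQ$ along the transport equations, which is possible because the leading symbol $c_2=1+\O(|x-y|+\mu)$ is elliptic there, and then integrating by parts against $Qv=0$), $|\xi|\leq \varepsilon_1/h$ (which is precisely where the $\varepsilon_1|\Im x|$ loss in the weight is produced), and the intermediate region $\varepsilon_1/h\leq|\xi|\leq C/h$, which is the only place where the microsupport information is needed. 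Your claimed ``equivalence'' is a reasonable heuristic for this, but you give no mechanism for it.

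The genuine gap is in how you propose to obtain (ii). You want to establish $MS(v)\subset\{\xi=0\}$ by propagating $u$ outward from the well, through the barrier, and across the turning point $z_1\in\partial\mathcal{M}$, and you flag that crossing as the main obstacle, to be resolved by Carleman inequalities in the spirit of \cite{DGM}. This is both the wrong source and the wrong tool. The exclusion of the $\Lambda_-$ component near $z_0$ cannot be extracted from data on the well side: at the turning point both branches match the same WKB data at the relevant exponential order, and what kills $\Lambda_-$ is the outgoing (resonance) boundary condition encoded in the complex distortion. The paper accordingly does not propagate anything here; it simply quotes the Helffer--Sj\"ostrand theory \cite{HeSj2} of outgoing resonant states, which gives $MS(ue^{S_0/h})\subset\Lambda_+$ near $z_0$ and hence $MS(v)\subset\{\eta=0\}$. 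Moreover, the Carleman estimates developed in this paper produce weighted $L^2$ bounds, not a localisation of the analytic microsupport on a Lagrangian, and they are used at a later stage and in the opposite direction: to push the improved bound (\ref{estuabs}), obtained inside $\mathcal{M}$ under the contradiction hypothesis, across $\partial\mathcal{M}$ into the barrier. As written, the central step of your argument is therefore both unproven and aimed at a mechanism that cannot deliver it.
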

\begin{proof}
Set
\be
v(x,h):= e^{-i\psi/h+S_0/h}u(x,h).
\ee
We have to prove that $v\in H_{\varepsilon_1|Im x|, z_0}$ for all $\varepsilon_1 >0$.

Let $\chi \in \mathcal{C}_{0}^{\infty}(\R^n)$ supported in a small neighborhood of $z_0$, and such that $\chi =1$ near $z_0$. Setting $\varphi(x,y,\tau):=(x-y)\tau+\frac{1}{2}i(x-y)^2$ and 
$a(x,\tau):=1+\frac{1}{2}ix\tau$,
we can write (see, e.g., \cite{Sj}),
\be
\label{repv}
v(x)=\int e^{i|\xi|\varphi(x,y,\frac{\xi}{|\xi|})}a(x-y,\frac{\xi}{|\xi|})v(y)\chi(y)dyd\xi.
\ee
Moreover, by the results of \cite{HeSj2}, we already know that, on the real, $v$ cannot be exponentially large, that is, for any $\varepsilon >0$, one has,
$$
v=\O(e^{\varepsilon /h}) \mbox{ locally uniformly on } \R^n.
$$
In addition, $v$ is solution to
\be
\label{Qv}
((hD_x+\nabla\psi)^2 +V-\rho)v=0.
\ee
We set,
$$
Q(y, hD_y):=(hD_y+\nabla\psi)^2 +V(y)-\rho,
$$
and, in order to estimate the integral, as in \cite{Ma1}, we first plan to construct a symbol $b=b(x,y,\tau, \xi ,h)\sim \sum_{k\geq 0} b_{k}(x,y,\tau,h)|\xi|^{-k}$, with large parameter $|\xi|$, in such a way that one has,
\be
\label{asymconst}
e^{-i|\xi|\varphi(x,y,\tau)} {}^{t}{Q}(y, hD_y)\left(e^{i|\xi|\psi(x,y,\tau)}b\right)  \sim a(x-y,\tau).
\ee
Here, the asymptotic must hold as $|\xi|\to \infty$, and the quantities $\tau\in S^{n-1}$, $\mu:=\frac1{h|\xi|}\in (0,\frac1{C}]$ (with $C>0$ large enough) have to be considered as extra parameters. In particular, (\ref{asymconst}) can be rewritten as,
\be
\left[(-D_y+\mu|\xi|\nabla\psi(y)-|\xi|\nabla_y\varphi)^2 +\mu^2|\xi|^2(V-\rho)\right]b \sim a (x-y,\tau)\mu^2|\xi|^2.
\ee

Since $(\nabla\psi)^2 = E_0-V$ and $\rho \to E_0$ as $h\to 0_+$, the (leading order) coefficient $c_2$ of $|\xi|^2$ satisfies,
$$
\begin{aligned}
c_2 & =\left(\mu\nabla\psi-\nabla_y\varphi\right)^2+\mu^2(V-\rho)\\
&= (\nabla_y\varphi)^2-2\mu\nabla_y\varphi\nabla\psi +o(1)\\
& =(-\tau-i(x-y))^2+2\mu(\tau-i(x-y))\nabla\psi+o(1)\\
&= 1+\O(|x-y|+\mu)+o(1).
\end{aligned}
$$

In particular, we can solve the transport equations for $x,y$ close enough to $z_0$, and for $\mu$ small enough, that is, $|\xi |\geq C/h$ with $C>0$ sufficiently large. By the microlocal analytic theory of \cite{Sj}, we also know that the resulting formal symbol admits analytic estimates, and can therefore be re-summed into a symbol $b(x,y,\tau, \xi ,h)$ such that, for some constant $\delta >0$, one has,

\be
\label{resumsymb}
e^{-i|\xi|\varphi(x,y,\tau)} {}^{t}{Q}(y, hD_y)\left(e^{i|\xi|\psi(x,y,\tau)}b\right)  - a(x-y,\tau) =\O(e^{-\delta |\xi|}),
\ee
uniformly with respect to $\tau \in S^{n-1}$, $|\xi |\geq C/h$, $h>0$ small enough, and $x,y\in \C^n$ close enough to $z_0$. 

Then, splitting the integral in (\ref{repv}), we write,
\be
v(x)=\int_{\{|\xi|\geq \frac{C}{h}\}} + \int_{\{|\xi|\leq\frac{\varepsilon_1}{h}\}} + \int_{\{\frac{\varepsilon_1}{h}\leq|\xi|\leq\frac{C}{h}\}}.
\ee

The first integral can be estimated by using (\ref{resumsymb}), an integration by part, and the fact that $v$ solves $Qv=0$. One finds that it is $\mathcal{O}(e^{-\delta_1/h})$ for some $\delta_1>0$.

The second integral can be estimated as in \cite{Ma1}, and it is $\mathcal{O}(e^{\varepsilon_1|Im x|/h})$.

For the third integral, we make the change of variable $\xi =\eta /h$, and we find,
\be
\label{lastint}
h^{-n}\int_{\{\varepsilon_1\leq |\eta|\leq C\}} e^{i(x-y)\eta/h-|\eta|(x-y)^2/2h}a(x-y,\frac{\eta}{|\eta|})\chi(y)v(y)dyd\eta.
\ee

But, from the theory of \cite{HeSj2}, we know that if $u$ is outgoing, then, near $z_0$, the microsupport of $u$ satisfies,
$$
MS(ue^{S_0/h}) \subset \Lambda_{+}.
$$
Since $\Lambda_+=\{ \nabla\psi (y) ; y \mbox{ close to } z_0\}$, by standard rules on the microsupport we deduce,
$$
MS(v) \subset \{\eta=0\}.
$$
As a consequence,  the integral appearing in (\ref{lastint}) is $\mathcal{O}(e^{-\delta_2/h})$ for some $\delta_2>0$, and the result follows.
\end{proof}

Thanks to this proposition, we can enter the framework of the analytic pseudodifferential calculus of \cite{Sj}. We set $v:= e^{-i\psi /h}u$, and, in a complex neighborhood of $z_0$, we can write $Pu=e^{-i\psi /h}Pe^{i\psi /h}v$ as,
\be
\label{anpseudo}
Pu(x)=\frac1{(2\pi h)^n}\int_{\Gamma(x)}e^{i(x-y)\alpha_\xi /h -[(x-\alpha_x)^2+(y-\alpha_x)^2]/2h}p_\psi (\alpha_x,\alpha_\xi) v(y)dyd\alpha,
\ee
where $p_\psi $ is the symbol of $P_\psi:= e^{-i\psi /h}Pe^{i\psi /h}$, and satisfies,
\be
\label{symbppsi}
p_\psi (\alpha) =(\alpha_\xi +\nabla\psi (\alpha_x))^2 + V(\alpha_x) +{\mathcal O}(h),
\ee
and where $\Gamma(x)$ is the (singular) complex contour of integration given by,
$$
\Gamma (x)\, :\, \left\{
\begin{aligned}
& \alpha_\xi = 2i\varepsilon_1\frac{\overline{x-y}}{|x-y|}\, ;\\
& |x-y| \leq r,\,\, y\in \C^n\,\, (r \mbox{ small enough with  respect to } \varepsilon_1)\, ;\\
& |x-\alpha_x| \leq r,\, \alpha_x\in\R^n.
\end{aligned}
\right.
$$
Let us observe that the identity (\ref{anpseudo}) takes place in $H_{-S_0+\varepsilon_1|\Im x|, z_0}$, that is, modulo error terms that are exponentially smaller than $e^{(-S_0+\varepsilon_1|\Im x|)/h}$ in a complex neighborhood of $z_0$.

Taking local coordinates $(x',x_n)\in \R^{n-1}\times \R$ near $z_1$, in such a way that $dV(z_1)\cdot x = -cx_n$ with $c>0$ (and thus $T_{z_1}\partial{\mathcal M} =\{ x_n=0\}$), we see that $\nabla\psi (x)$ remains close to $(0, \sqrt x_n)$. In particular, still working in these coordinates, the symbol $-V(x)-(\xi'+\nabla_{x'}\psi (x))^2$
is elliptic along $\Gamma(x)$ (at least if $\varepsilon_1$ has been chosen sufficiently small), and with positive real part. Thus, in view of (\ref{symbppsi}), so is $(\xi_n+\partial_{x_n}\psi (x) )^2-p_\psi (x,\xi) +\rho$. As a consequence, applying the symbolic calculus of \cite{Sj}, we conclude to the existence of a pseudodifferential operator $A=A(x,hD_{x'})$, with principal symbol,
$$
a(x,\xi') =\sqrt{\rho-V(x)-(\xi'+\nabla_{x'}\psi (x))^2},
$$
such that $P_\psi -\rho$ can be factorised as,
$$
P_\psi -\rho =(hD_{x_n}+\partial_{x_n}\psi (x) +A)\circ (hD_{x_n}+\partial_{x_n}\psi (x) -A),
$$
when acting on $H_{-S_0+\varepsilon_1|\Im x|, z_0}$. Since $(P_\psi -\rho)v=0$, and $hD_{x_n}+\partial_{x_n}\psi (x) +A$ is elliptic along $\Gamma (x)$, we deduce,
\be
\label{reduction}
(hD_{x_n}+\partial_{x_n}\psi (x) -A)v=0\quad \mbox{ in } H_{-S_0+\varepsilon_1|\Im x|, z_0}.
\ee
Now, going back to (\ref{green}), and choosing $\Omega$ in such a way that its boundary contains $z_0$ and is of the form $\{x_n =\delta_0\}$  (with $\delta_0 >0$ constant) near $z_0$, the corresponding part of the integral can be written as,
$$
I_0:=-h^2\Im \int_{\{x_n=\delta_0\}\cap W_0}\left( \frac{\partial v}{\partial x_n} + \frac{i}{h} \frac{\partial \psi}{\partial x_n}v\right )\overline{v} dx'
$$
where $W_0$ is a small real neighborhood of $z_0$. Thus, using (\ref{reduction}), we obtain,
$$
I_0=-h\Re \int_{\{x_n=\delta_0\}\cap W_0} (Av)\overline{v} dx'+{\mathcal O}(e^{-(2S_0+\varepsilon_0)/h}),
$$
with $\varepsilon_0>0$. Finally, observing that the principal symbol of $A$ is strictly positive in $(z_0,0)$, and proceeding as in \cite{Ma1}, Section 2 (in particular, considering the realisation on the real of $A$), we can construct an elliptic pseudodifferential operator $B$ of order 0, such that,
$$
A =B^*B + {\mathcal O}(e^{-(S_0 + \varepsilon)/h})
$$
on $L^2(\{x_n=0\}\cap W_0)$ (with some $\varepsilon >0$). Finally, taking advantage of the ellipticity of $B$, we conclude, as in \cite{Ma1}, Lemma 2.3, that we have,
$$
I_0 \leq -\frac{h}{C}\| v\|^2_{L^2(\{x_n=0\}\cap W_0)}+ {\mathcal O}(e^{-(2S_0+\varepsilon_0)/h}),
$$
where $C, \varepsilon_0$ are positive constants. Summing up all the contributions, and observing that, in the previous formula, $v$ may be replaced by $u$ (since $\psi$ is real on the real and $v=e^{-i\psi /h}u$), we have proved,
\begin{proposition}\sl
There exist $C,\varepsilon_0>0$ such that,
$$
|\Im\rho | \geq \frac{h}{C}\| u\|_{L^2(\partial\Omega)}^2 -Ce^{-(2S_0+\varepsilon_0)/h},
$$
uniformly for $h>0$ small enough.
\end{proposition}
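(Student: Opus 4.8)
Since the statement essentially packages the local analysis carried out above, the plan is to localise near each ``point of type~1'', run the microlocal reduction, and then add up finitely many contributions. First I would take (\ref{green}) as the starting point: it reduces $\Im\rho$, modulo $\O(e^{-(2S_0+\varepsilon_0')/h})$, to the boundary integral over $\partial\Omega\cap{\mathcal A}'$, a small neighborhood of ${\mathcal A}=\Pi_x\bigl(\bigcup_t\exp tH_p({\mathcal T}_1\times\{0\})\bigr)$. By Assumption~4 the set $G$, hence ${\mathcal T}_1=\bigcup_{\gamma\in G}(\gamma\cap\partial{\mathcal M})$, is finite, so, choosing a partition of unity subordinate to a finite cover of $\partial\Omega\cap{\mathcal A}'$, it is enough to estimate one model piece of $\partial\Omega$ attached to a flowed-out point $z_0=\Pi_x\exp t_0H_p(z_1,0)$, $z_1\in{\mathcal T}_1$, and then sum.

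On such a piece I would use the phase $\psi$ solving the eikonal equation (\ref{eikon}), holomorphically continued near $z_0$, and invoke the Proposition above to know that $v:=e^{-i\psi/h}u\in H_{-S_0+\varepsilon_1|\Im x|,z_0}$. This membership is what licenses the analytic pseudodifferential calculus of \cite{Sj}, the representation (\ref{anpseudo})--(\ref{symbppsi}) along the singular contour $\Gamma(x)$, and hence, in coordinates $(x',x_n)$ adapted to $\partial{\mathcal M}$ at $z_1$ (so that $\nabla\psi\approx(0,\sqrt{x_n})$), the factorisation
\[
P_\psi-\rho=(hD_{x_n}+\partial_{x_n}\psi+A)\circ(hD_{x_n}+\partial_{x_n}\psi-A)
\]
with $A=A(x,hD_{x'})$ of principal symbol $\sqrt{\rho-V-(\xi'+\nabla_{x'}\psi)^2}$, elliptic with positive real part along $\Gamma(x)$ once $\varepsilon_1$ is small. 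Since the first factor is elliptic there and $(P_\psi-\rho)v=0$, I would deduce the reduced equation (\ref{reduction}), i.e. $(hD_{x_n}+\partial_{x_n}\psi-A)v=0$ in $H_{-S_0+\varepsilon_1|\Im x|,z_0}$.

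Next I would arrange $\partial\Omega$ to coincide with $\{x_n=\delta_0\}$ ($\delta_0>0$ a small constant) near $z_0$, so that the corresponding contribution to (\ref{green}) is $I_0=-h^2\Im\int_{\{x_n=\delta_0\}\cap W_0}\bigl(\partial_{x_n}v+\tfrac{i}{h}\partial_{x_n}\psi\,v\bigr)\overline v\,dx'$. Substituting (\ref{reduction}) turns the normal derivative into $Av$, giving $I_0=-h\,\Re\int(Av)\overline v\,dx'+\O(e^{-(2S_0+\varepsilon_0)/h})$. Since the principal symbol of $A$ is strictly positive at $(z_0,0)$, I would, exactly as in \cite{Ma1}, Section~2, pass to the realisation of $A$ on the real and write $A=B^*B+\O(e^{-(S_0+\varepsilon)/h})$ on $L^2(\{x_n=0\}\cap W_0)$ with $B$ elliptic of order $0$; ellipticity of $B$ (as in \cite{Ma1}, Lemma~2.3) then gives $\Re(Av,v)\geq\frac1C\norm{v}^2_{L^2(\{x_n=0\}\cap W_0)}+\O(e^{-(2S_0+\varepsilon_0)/h})$, so $I_0\leq-\frac hC\norm{v}^2_{L^2(\{x_n=0\}\cap W_0)}+\O(e^{-(2S_0+\varepsilon_0)/h})$. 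As $\psi$ is real on $\R^n$ and $v=e^{-i\psi/h}u$, here $\norm{v}=\norm{u}$. Summing the $I_0$'s over the finitely many model pieces covering $\partial\Omega\cap{\mathcal A}'$, and noting that on $\partial\Omega\setminus{\mathcal A}'$ one has $\norm{u}^2_{L^2}=\O(e^{-(2S_0+\varepsilon_0)/h})$ by (\ref{estumer2}), yields $|\Im\rho|\geq\frac hC\norm{u}^2_{L^2(\partial\Omega)}-Ce^{-(2S_0+\varepsilon_0)/h}$.

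The step I expect to cost the most is the a priori estimate identifying $e^{-i\psi/h}u$ as an element of $H_{-S_0+\varepsilon_1|\Im x|,z_0}$, i.e. the Proposition above: the contour representation, the factorisation and the $A=B^*B$ decomposition are only meaningful once $v$ is known to live in that space. That estimate rests on the analytically controlled asymptotic construction of the amplitude $b$ solving (\ref{asymconst}), the three-region splitting of the oscillatory integral (\ref{repv}), and the microsupport inclusion $MS(v)\subset\{\eta=0\}$ coming from the outgoing condition on $u$. A secondary, bookkeeping-type difficulty is the precise matching of the real realisation of $A$ with $B^*B$ modulo exponentially small errors, together with organising the finite cover of $\partial\Omega\cap{\mathcal A}'$ so that the local models glue without double counting.
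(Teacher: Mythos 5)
Your proposal follows essentially the same route as the paper: starting from (\ref{green}), invoking the a priori estimate $e^{-i\psi/h}u\in H_{-S_0+\varepsilon_1|\Im x|,z_0}$ to enter Sj\"ostrand's analytic calculus, factorising $P_\psi-\rho$ to obtain (\ref{reduction}), converting $I_0$ into $-h\,\Re\int(Av)\overline v\,dx'$, using $A=B^*B$ plus ellipticity as in \cite{Ma1}, and summing the finitely many local contributions while controlling $\partial\Omega\setminus{\mathcal A}'$ via (\ref{estumer2}). This matches the paper's argument step for step, so no further comment is needed.
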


From now on, we proceed by contradiction : We assume the existence of $\varepsilon_1 >0$ such that,
$$
|\Im\rho | ={\mathcal O}(e^{-2(S_0+\varepsilon_1)/h}),
$$
uniformly as $h\to 0_+$ (possibly along a sequence of numbers only). By the previous proposition, this implies,
\be
\label{estuabs}
\| u\|_{L^2(\partial\Omega)}={\mathcal O}(e^{-(S_0+\varepsilon_1)/h}).
\ee

\section{Propagation across $\partial{\mathcal M}$}

The purpose of this section is to propagate the estimate (\ref{estuabs}) up to the boundary of ${\mathcal M}$ and beyond. As in the previous section, we start by working locally near some point $z_1\in {\mathcal T}_1$, and we observe that (\ref{estuabs}) actually implies that $e^{S_0/h}u$ is exponentially small near any point $z_1(t):=\Pi_x\exp tH_p (z_1,0)$, with $|t|>0$ small enough (this can be seen either by standard propagation, or more simply by the fact that the choice of $\Omega$ can be changed without altering (\ref{estuabs}), as long as its boundary stays in ${\mathcal M}$).

Near $z_1$, $u$ does not satisfy anymore sufficiently good a priori estimates that would permit us to use standard propagation (we only have $u={\mathcal O}(e^{(-\min(S_0,d_V(U,x))+\varepsilon)/h})$ for all $\varepsilon >0$, and $d_V(U,x)$ takes values strictly less than $S_0$ near $x_1$). However, we will profit from the fact that the difference between $S_0$ and $d_V(U,x)$ is controlled by $\delta (x)^{3/2}$, where $\delta (x)$ is the usual distance between $x$ and the caustic set ${\mathcal C}$ where $x\mapsto d(U,x)$ becomes singular (see \cite{HeSj2}).

For this purpose, we will use explicit Carleman estimates, in a spirit similar to that of \cite{DGM} (see also \cite{KSU}).

By assumption 3, we know that ${\mathcal T}_1$ is finite. We fix once for all $z_1\in{\mathcal T}_1$, and we will first prove that $e^{S_0/h}u$ is exponentially small near $z_1$.

As in the previous section, we take  local coordinates $(x',x_n)\in \R^{n-1}\times \R$ centered at $z_1$, in such a way that $dV(z_1)\cdot x = -cx_n$ with $c>0$. We also consider a tubular neighborhood $Z_\delta$ of $z_1$ of the form,
$$
Z_\delta :=\{ -\delta \leq x_n\leq \delta_0\, ;\, |x'|\leq \delta_0\},
$$
where $\delta_0>0$ is fixed sufficiently small, and $\delta \in(0,\delta_0)$ is a small parameter that we will possibly shrink later on. We divide the boundary of $Z_\delta$ into,
$$
\partial Z_\delta =\Sigma_\delta \cup \Sigma_0\cup\Sigma',
$$
with,
$$
\begin{aligned}
&\Sigma_\delta:=\{ x_n=-\delta  ;\, |x'|\leq \delta_0\};\\
&\Sigma_0:=\{ x_n=\delta_0  ;\, |x'|\leq \delta_0\};\\
&\Sigma':=\{ -\delta \leq x_n\leq \delta_0\, ;\, |x'|= \delta_0\}.
\end{aligned}
$$

Note that, thanks to Assumption 3 and (\ref{estumer2}), we already know that, for $\delta$ sufficiently small, there exists $\varepsilon_0>0$ independent of $\delta$ such that,
\be
\label{estsihma'}
u={\mathcal O}(e^{-(S_0+\varepsilon_0)/h}) \,\, \mbox{ on } \Sigma',
\ee
and, using the equation $Pu=\rho u$, we obtain similar estimates on the derivatives of $u$, too. Moreover, by (\ref{estuabs}), we also have,
\be
\label{estsihma0}
u={\mathcal O}(e^{-(S_0+\varepsilon_0')/h}) \,\, \mbox{ on } \Sigma_0,
\ee
for some constant $\varepsilon_0'>0$ (and the same for the derivatives of $u$).

Finally, by the same techniques as in \cite{Ma1}, Section 2, we see that, near $z_1$, the distance $d_V(x,{\mathcal M})$ satisfies,
$$
d_V(x,{\mathcal M}) ={\mathcal O}(|x_n|^{3/2}),
$$
and, by the triangle inequality, we also have,
$$
d_V(U,x)\geq S_0 - d_V(x,{\mathcal M}).
$$
As a consequence, there exists a constant $c_0>0$ such that,
$$
d_V(U,x)\geq S_0 - c_0|x_n|^{3/2}
$$
and thus, for any $\delta, \varepsilon >0$ small enough, one has,
\be
\label{estsihmadelta}
u={\mathcal O}(e^{-(S_0-c_0\delta^{3/2}-\varepsilon)/h}) \,\, \mbox{ on } \Sigma_\delta,
\ee
and similarly for the derivatives of $u$.
\begin{proposition}\sl
\label{propinM}
For $\delta >0$ sufficiently small, there exists $\varepsilon_\delta >0$ such that,
$$
\| u\|_{L^2(Z_\delta)}={\mathcal O}(e^{-(S_0+\varepsilon_\delta)/h}),
$$
uniformly for $h>0$ small enough.
\end{proposition}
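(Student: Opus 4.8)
The plan is to propagate the a priori bounds \eqref{estsihma'}, \eqref{estsihma0} and \eqref{estsihmadelta} from $\partial Z_\delta$ into the interior of $Z_\delta$ by means of an explicit Carleman estimate, following an idea already present in \cite{DGM}. The two facts we rely on are that $u$ solves $(P-\rho)u=0$ on $Z_\delta$ and that, by the contradiction hypothesis of the previous section, $|\Im\rho|={\mathcal O}(e^{-2(S_0+\varepsilon_1)/h})$ is exponentially small. Hence, for any real weight $\varphi=\varphi(x)$, the function $w:=e^{\varphi/h}u$ satisfies $P_\varphi w=i(\Im\rho)w$, where $P_\varphi:=e^{\varphi/h}(P-\Re\rho)e^{-\varphi/h}$; in particular $\|P_\varphi w\|_{L^2(Z_\delta)}=|\Im\rho|\,\|w\|_{L^2(Z_\delta)}$ is negligible compared with $\sqrt h\,\|w\|_{L^2(Z_\delta)}$.

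First I would fix the weight. It may be taken independent of $x'$ and essentially linear in $x_n$: $\varphi=\varphi(x_n)$ with $\varphi'\equiv m$ a small positive constant, of size comparable to $\min(\varepsilon_0,\varepsilon_0')/\delta_0$, normalised so that $\varphi(-\delta)\le S_0-2c_0\delta^{3/2}$. The conditions to be met are: (i) $\varphi(-\delta)<S_0-c_0\delta^{3/2}$, so that by \eqref{estsihmadelta} the part of the boundary integral over $\Sigma_\delta$ is exponentially small; (ii) $\sup_{[-\delta,\delta_0]}\varphi=\varphi(\delta_0)<S_0+\min(\varepsilon_0,\varepsilon_0')$, so that by \eqref{estsihma'} and \eqref{estsihma0} the parts over $\Sigma'$ and $\Sigma_0$ are exponentially small; (iii) $\varphi\ge S_0+\varepsilon_\delta$ on the smaller tube $Z_{\delta/2}$, for some $\varepsilon_\delta>0$. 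These are jointly realizable: since the loss $c_0\delta^{3/2}$ at $\Sigma_\delta$ is $o(\delta)$, a weight of slope $m$ rising from $S_0-2c_0\delta^{3/2}$ at $x_n=-\delta$ already exceeds $S_0+\varepsilon_\delta$ at $x_n=-\delta/2$, with $\varepsilon_\delta$ a small multiple of $\delta$, while its total increase over $[-\delta,\delta_0]$ is $\lesssim m\delta_0$, which is $<\min(\varepsilon_0,\varepsilon_0')$ once $\delta_0$ is chosen small enough.

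Then I would establish the Carleman inequality. Writing $P_\varphi=A+B$ with $A:=-h^2\Delta+V-\Re\rho-m^2$ (formally self-adjoint) and $B:=2hm\,\partial_{x_n}$ (formally skew-adjoint), integration by parts on $Z_\delta$ yields
\[
\|P_\varphi w\|_{L^2(Z_\delta)}^2=\|Aw\|^2+\|Bw\|^2+\langle [A,B]w,w\rangle+\mathcal R,
\]
where $\mathcal R$ gathers the boundary terms on $\Sigma_\delta\cup\Sigma_0\cup\Sigma'$. The decisive point --- and the reason a merely linear weight is admissible here --- is that, $\varphi$ being linear, $[A,B]$ reduces to the multiplication operator $-2hm\,\partial_{x_n}V$; and, by Assumption 3, $\nabla V$ does not vanish on $\partial{\mathcal M}$, so in the chosen coordinates $\partial_{x_n}V(z_1)=-c<0$ and thus $\partial_{x_n}V\le -c/2$ on all of $Z_\delta$ provided $\delta_0$ is small. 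Therefore $[A,B]\ge hmc$ as an operator (this is the sub-ellipticity, which here costs nothing), and $\langle [A,B]w,w\rangle\ge hmc\,\|w\|^2$. Combining this with the identity above and with $\|P_\varphi w\|=|\Im\rho|\,\|w\|$, one gets, for $h$ small,
\[
\tfrac{1}{2}hmc\,\|w\|_{L^2(Z_\delta)}^2\le|\mathcal R|.
\]
Finally, each term of $\mathcal R$ is a surface integral of $e^{2\varphi/h}$ times a quadratic expression in $u$ and $h\nabla u$; bounding $u$ and its first derivatives on $\Sigma_\delta$, $\Sigma_0$ and $\Sigma'$ via \eqref{estsihmadelta} (with, say, $\varepsilon=c_0\delta^{3/2}/2$ there), \eqref{estsihma0} and \eqref{estsihma'}, and using (i)--(ii), one finds $\mathcal R={\mathcal O}(e^{-\kappa/h})$ for some $\kappa=\kappa(\delta)>0$. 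Hence $\|w\|_{L^2(Z_\delta)}={\mathcal O}(e^{-\kappa'/h})$, and therefore, on the tube where $\varphi\ge S_0+\varepsilon_\delta$,
\[
\|u\|_{L^2(Z_{\delta/2})}^2=\int_{Z_{\delta/2}}e^{-2\varphi/h}|w|^2\,dx\le e^{-2(S_0+\varepsilon_\delta)/h}\,\|w\|_{L^2(Z_\delta)}^2={\mathcal O}(e^{-2(S_0+\varepsilon_\delta)/h}).
\]
Renaming $\delta/2$ as $\delta$ --- legitimate, since the statement only asserts the existence of some tubular neighbourhood of $z_1$ and of some $\varepsilon_\delta>0$ --- completes the proof.

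The obstacle I expect is bookkeeping rather than conceptual: arranging the several constraints on $\varphi$ to be mutually compatible (which relies precisely on the caustic estimate $d_V(x,{\mathcal M})={\mathcal O}(|x_n|^{3/2})$, i.e. on the a priori loss on $\Sigma_\delta$ being sublinear in $\delta$), and then carefully controlling every boundary term in $\mathcal R$, including those involving $h\nabla u$ on each face of $\partial Z_\delta$. By contrast, the commutator positivity --- which is usually the core difficulty of a Carleman estimate --- is automatic here, because the minimal Agmon geodesics meet $\partial{\mathcal M}$ transversally at points where $\partial_\nu V\neq0$; it is this feature that turns the estimate into the elementary, ``explicit'' Carleman-type inequality announced in the introduction.
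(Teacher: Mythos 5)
Your proof is correct and follows essentially the same route as the paper: the same linear-in-$x_n$ exponential weight (the paper's $e^{\alpha(x_n+\delta)/h}$), the same source of positivity from $\partial_{x_n}V\le -c/2$ on $Z_\delta$, the same use of the boundary estimates (\ref{estsihma'})--(\ref{estsihmadelta}), and the same final balancing of the linear gain $\alpha\delta$ against the sublinear loss $2c_0\delta^{3/2}$. The only difference is presentational: you package the cross term as the commutator $\langle [A,B]w,w\rangle$ with $A$ self-adjoint and $B$ skew-adjoint, whereas the paper computes the same quantity $-2h\alpha\int_{Z_\delta}(\partial_{x_n}V)|v|^2\,dx$ plus boundary terms by explicit integrations by parts in its Lemma.
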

\begin{proof} The proof relies on some explicit Carleman-type estimates, in a way rather similar to that of \cite{DGM}. We set,
$$
v:= e^{\alpha (x_n+\delta)/h}u,
$$
where $\alpha >0$ is fixed sufficiently small in order to have $2\alpha \delta_0 < \min(\varepsilon_0, \varepsilon_0')/2$ (here $\varepsilon_0,\varepsilon_0'$ are those of (\ref{estsihma'})-(\ref{estsihma0})). The function $v$ is solution to,
$$
(-h^2\Delta + V -\rho -\alpha^2+2h\alpha\partial_{x_n})v=0,
$$
and, by (\ref{estsihma'})-(\ref{estsihmadelta}), for any $\varepsilon>0$ and for some $\varepsilon_1>0$, we have,
\be
\label{estvbord}
\begin{aligned}
& \| v\|_{H^2(\Sigma_0\cup \Sigma')} ={\mathcal O}(e^{-(S_0+\varepsilon_1)/h});\\
& \| v\|_{H^2(\Sigma_\delta)}={\mathcal O}(e^{-(S_0-c_0\delta^{3/2}-\varepsilon)/h}).
\end{aligned}
\ee
On the other hand, we have,
\be
\label{carleman1}
\begin{aligned}
0 = &\| (-h^2\Delta + V -\rho -\alpha^2+2h\alpha\partial_{x_n})v\|^2_{L^2(Z_\delta)}\\
= & \| (-h^2\Delta + V -\rho -\alpha^2)v\|^2_{L^2(Z_\delta)}+4h^2\alpha^2 \| \partial_{x_n}v\|^2_{L^2(Z_\delta)}\\
&+ 4h\alpha\,\Re \int_{Z_\delta}(-h^2\Delta + V -\rho -\alpha^2)v\, \overline{\partial_{x_n}v}dx.
\end{aligned}
\ee
We first prove,
\begin{lemma}\sl
There exists a constant $C>0$ such that, for all $\delta >0$ small enough,
\be
\label{estCarleman}
\Re  \int_{Z_\delta}(-h^2\Delta + V -\rho -\alpha^2)v\, \overline{\partial_{x_n}v}dx\geq \frac1{C}\| v\|_{L^2(Z_{\delta})}^2-Ce^{-2(S_0-c_0\delta^{3/2}-\varepsilon)/h}.
\ee
\end{lemma}
\begin{proof} We set $Q:= 4h\alpha\Re  \int_{Z_\delta}(-h^2\Delta + V -\rho -\alpha^2)v\, \overline{\partial_{x_n}v}dx$.
By integrations by part in $x_n$, we have,
$$
\begin{aligned}
  Q= & 4h\alpha\Re \int_{\Sigma_0}\left(-h^2\Delta+V-\rho-\alpha^2\right)v\cdot\overline{v}dx'\\
  & -
4h\alpha\Re\int_{\Sigma_\delta}\left(-h^2\Delta+V-\rho-\alpha^2\right)v\cdot\overline{v}dx'\\
 & -4h\alpha\Re\int_{Z_\delta}\partial_{x_n}\left(-h^2\Delta+V-\rho-\alpha^2\right)v\cdot\overline{v}dx
 \end{aligned}
 $$
 and thus, by (\ref{estvbord}),
 
 $$
\begin{aligned}
Q = & 4h\alpha\Re\left[-\int_{\Sigma_\delta}\left(-h^2\partial_{x_n}^2-h^2\Delta_{x'}\right)v\cdot\overline{v}dx'-
\int_{\Sigma_\delta}(V-\rho-\alpha^2)v\cdot\overline{v}dx'\right]\\
& - 4h\alpha \int_{Z_\delta}\partial_{x_n}(-h^2\partial_{x_n}^2-h^2\Delta_{x'}+V-\rho-\alpha^2)v\cdot\overline{v}dx +{\mathcal O}(e^{-2(S_0+\varepsilon_1)/h})\\
 =  & 4h\alpha\Re\int_{\Sigma_\delta}h^2\partial_{x_n}^2 v\cdot \overline{v}dx'+4h\alpha\Re\int_{\Sigma_\delta}h^2\Delta_{x'}v\cdot\overline{v}dx'
-4h\alpha\Re\int_{\Sigma_\delta}(V-\rho-\alpha^2)|v|^2dx'\\
& -4h\alpha\Re\int_{Z_\delta}(-h^2\partial_{x_n}^2-h^2\Delta_{x'}+V-\rho-\alpha^2)\partial_{x_n}v\cdot\overline{v}dx\\ 
& -4h\alpha\Re\int_{Z_\delta}(\partial_{x_n}V)v\cdot\overline{v}dx +{\mathcal O}(e^{-2(S_0+\varepsilon_1)/h}).
\end{aligned}
 $$
 Then, using Green's formula in the $x'$ variables, and again (\ref{estvbord}), we obtain,
 
  $$
\begin{aligned}
Q=& 4h\alpha\Re\int_{\Sigma_\delta}h^2\partial_{x_n}^2v\cdot\overline{v}dx'-
4h\alpha\Re\int_{\Sigma_\delta}h^2|\nabla_{x'}|^2dx'-4h\alpha\Re\int_{\Sigma_\delta}(V-\rho-\alpha^2)|v|^2dx'\\
& -4h\alpha\Re\int_{Z_\delta}-h^2\partial_{x_n}^3v\cdot\overline{v}dx
-4h\alpha\Re\int_{Z_\delta}-h^2\Delta_{x'}\partial_{x_n}v\cdot\overline{v}dx\\
& - 4h\alpha\Re\int_{\Sigma_\delta}(V-\rho-\alpha^2)\partial_{x_n}v\cdot\overline{v}dx\\
& - 4h\alpha\Re\int_{Z_\delta}(\partial_{x_n}V)v\cdot\overline{v}dx+{\mathcal O}(e^{-2(S_0+\varepsilon_1)/h})\\
= & 4h\alpha\Re\int_{\Sigma_\delta}h^2\partial_{x_n}^2v\cdot\overline{v}dx'-4h\alpha\int_{\Sigma_\delta}h^2|\nabla_{x'}|^2dx'\\
&-4h\alpha\Re\int_{\Sigma_\delta}(V-\rho-\alpha^2)|v|^2dx'-4h\alpha\Re\int_{Z_\delta}h^2\partial_{x_n}^2v\partial_{x_n}\overline{v}dx\\
& -4h\alpha\Re\int_{\Sigma_\delta}h^2\partial_{x_n}^2v\cdot\overline{v}dx'
-4h\alpha\Re\int_{Z_\delta}(\partial_{x_n}v)\overline{\left(-h^2\Delta_{x'}+(V-\rho-\alpha^2)\right)v}dx\\
& -4h\alpha\Re\int_{Z_\delta}(\partial_{x_n}V)v\cdot\overline{v}dx+{\mathcal O}(e^{-2(S_0+\varepsilon_1)/h}).
\end{aligned}
$$

By an integration by parts, we also see that,

$$
\Re\int_{Z_\delta}\partial_{x_n}^2v\partial_{x_n}\overline{v}dx=-\frac12 \int_{\Sigma_\delta}|\partial_{x_n}v|^2dx' +{\mathcal O}(e^{-2(S_0+\varepsilon_1)/h})
$$

and therefore,

 $$
\begin{aligned}
Q= & -4h\alpha\int_{\Sigma_\delta}h^2|\nabla_{x'}|^2dx'-4h\alpha\Re\int_{\Sigma_\delta}(V-\rho-\alpha^2)|v|^2dx'\\
&+2h\alpha \int_{\Sigma_{\delta}}h^2|\partial_{x_n}v|^2dx' -\overline{Q} -4h\alpha\Re\int_{Z_\delta}(\partial_{x_n}V)v\cdot\overline{v}dx+{\mathcal O}(e^{-2(S_0+\varepsilon_1)/h}).
\end{aligned}
$$
Using that $\overline Q =Q$, we finally obtain,
$$
\begin{aligned}
Q= &-2h^3\alpha\| \nabla_{x'}v\|^2_{\Sigma_\delta}+h^3\alpha\| \partial_{x_n}v\|^2_{\Sigma_\delta}-2h\alpha\int_{Z_{\delta}}(\partial_{x_n}V)|v|^2 dx\\
& -2h\alpha\Re\int_{\Sigma_{\delta}}(V-\rho-\alpha)|v|^2 dx'+{\mathcal O}(e^{-2(S_0+\varepsilon_1)/h})\\
=& -2h\alpha\int_{Z_{\delta}}(\partial_{x_n}V)|v|^2 dx+{\mathcal O}(e^{-2(S_0-c_0\delta^{3/2}-\varepsilon)/h}),
\end{aligned}
$$
where $\varepsilon >0$ is arbitrarily small.
Since $\partial_{x_n}V(z_1) = -c <0$, we deduce the result of the lemma.
\end{proof}

Inserting (\ref{estCarleman}) into (\ref{carleman1}), we obtain,
$$
\| v\|_{L^2(Z_{\delta})}^2={\mathcal O}(e^{-2(S_0-c_0\delta^{3/2}-\varepsilon)/h}),
$$
and therefore, since $2\alpha (x_n+\delta)\geq \alpha\delta$ on $Z_{\delta /2}\subset Z_{\delta}$,
$$
\| u\|_{L^2(Z_{\delta/2})}^2={\mathcal O}(e^{-(2S_0+\alpha\delta -2c_0\delta^{3/2}-2\varepsilon)/h}).
$$
Observing that $\alpha$ has been chosen independently of $\delta$, we obtain the result of Proposition \ref{propinM} by taking $\delta$ sufficiently small in order to have $\alpha\delta >2c_0\delta^{3/2}$.
\end{proof}

Summing up, using Lemma \ref{propinM} at each point of ${\mathcal T}_1$ and (\ref{estumer2}), we obtain,
\begin{proposition}\sl
\label{propintM}
There exists of a neighborhood ${\mathcal V}$ of $\partial{\mathcal M}$, and a constant $\varepsilon_2>0$, such that,
$$
\| u\|_{L^2({\mathcal V})} ={\mathcal O}(e^{-(S_0+\varepsilon_2)/h}),
$$
uniformly for $h>0$ small enough.
\end{proposition}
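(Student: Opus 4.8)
\emph{Proof strategy.} The plan is to establish the bound locally near $\partial{\mathcal M}$ and then to patch by compactness. By Assumption 3, $\partial{\mathcal M}$ is a compact smooth hypersurface along which $\nabla V\neq0$, and by Assumption 4 the set ${\mathcal T}_1$ is finite. First I would fix $\delta>0$ small enough for Proposition \ref{propinM} to hold simultaneously at every point of ${\mathcal T}_1$, with a common gain $\varepsilon_\delta>0$, so that $\|u\|_{L^2(Z_\delta(z_1))}={\mathcal O}(e^{-(S_0+\varepsilon_\delta)/h})$ on the box $Z_\delta(z_1)$ attached to each $z_1\in{\mathcal T}_1$; note that each such box is a \emph{two-sided} neighbourhood of $z_1$, straddling $\partial{\mathcal M}$. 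It then remains to bound $u$ on a neighbourhood of the compact set $K:=\partial{\mathcal M}\setminus\bigcup_{z_1\in{\mathcal T}_1}\mathrm{int}\,Z_\delta(z_1)$, which lies at a positive distance from ${\mathcal T}_1$; granting this, ${\mathcal V}$ is taken to be the union of the finitely many $Z_\delta(z_1)$ with that neighbourhood of $K$, and $\varepsilon_2$ the smallest of the gains.

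To handle $K$, the plan is to invoke (\ref{estumer2}). I would take the compact collar ${\mathcal W}:=\{x\in{\mathcal M}\,;\,\dist(x,K)\le\eta'\}$ (legitimate since $K\subset\partial{\mathcal M}\subset{\mathcal M}$, ${\mathcal M}$ being closed) with $\eta'>0$ small, so that the neighbourhood ${\mathcal W}'$ of ${\mathcal W}$ furnished by (\ref{estumer2}) is a genuine two-sided strip around $K$ (it enters slightly into the barrier $\{V>0\}$ as well). Provided ${\mathcal W}$ stays away from ${\mathcal A}=\Pi_x\bigl(\bigcup_{t\in\R}\exp tH_p({\mathcal T}_1\times\{0\})\bigr)$, (\ref{estumer2}) gives $\|u\|_{L^2({\mathcal W}')}={\mathcal O}(e^{-(S_0+\varepsilon_0)/h})$ for some $\varepsilon_0>0$, which is precisely the bound wanted near $K$. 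So the one thing to check is that ${\mathcal A}$ meets a thin neighbourhood of $\partial{\mathcal M}$ only near ${\mathcal T}_1$: a trajectory issued from $(z_1,0)$, $z_1\in{\mathcal T}_1$, starts at a simple turning point (as $\nabla V(z_1)\neq0$), leaves $\partial{\mathcal M}$ immediately, for $|t|$ past some $t_0$ has momentum of size $\gtrsim t_0$ and hence lies in $\{V\leq-c_1\}$ for some $c_1>0$ — that is, at positive Agmon distance from $\partial{\mathcal M}$ — and by non-trapping escapes to infinity; since on $p^{-1}(0)$ closeness to $\partial{\mathcal M}$ forces $|\xi|$ small, its base point could return near $\partial{\mathcal M}$ only through a later turning point of the trajectory. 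If such returns are excluded, then for $\eta'$ small ${\mathcal W}$ stays away from ${\mathcal A}$, and the argument closes with $\varepsilon_2:=\min(\varepsilon_\delta,\varepsilon_0)$.

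The hard part is precisely this last geometric point — that the flow-out of ${\mathcal T}_1\times\{0\}$ does not re-approach $\partial{\mathcal M}$ away from ${\mathcal T}_1$ (equivalently, that a trajectory launched from a type-1 turning point never returns to a turning point lying on $\partial{\mathcal M}$ but outside ${\mathcal T}_1$). It is needed twice: to provide the ${\mathcal A}$-free collar ${\mathcal W}$ above, and because at a hypothetical non-type-1 return point $\tilde z$ Proposition \ref{propinM} could not be used — its proof feeds on the a priori bound $u={\mathcal O}(e^{-(S_0+\varepsilon_0')/h})$ on the ${\mathcal M}$-side face of the box around $\tilde z$, which near such a $\tilde z$ is exactly what would be unavailable. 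I expect this exclusion to follow from the non-trapping hypothesis (Assumption 3), possibly together with the finiteness built into Assumption 4, but it is the step demanding genuine care. Once it is in hand, summing the finitely many local estimates yields $\|u\|_{L^2({\mathcal V})}={\mathcal O}(e^{-(S_0+\varepsilon_2)/h})$, which is the claim.
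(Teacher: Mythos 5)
Your approach coincides with the paper's: the proof given there is literally the one-line combination of Proposition \ref{propinM} at each of the finitely many points of ${\mathcal T}_1$ with (\ref{estumer2}) on the remaining, compact part of $\partial{\mathcal M}$, exactly as you propose. The geometric point you flag as the unresolved ``hard part'' does follow from Assumption 3 alone, by an argument you stopped just short of: if $\gamma(t):=\exp tH_p(z_1,0)$ satisfied $\gamma(T)=(\tilde z,0)$ with $\tilde z\in\partial{\mathcal M}$ for some $T\neq 0$ (and any return of the base point to $\partial{\mathcal M}$ must occur at such a zero-momentum point, since $\xi^2=-V=0$ there on $p^{-1}(0)$), then, $p$ being even in $\xi$, the involution $(x,\xi)\mapsto(x,-\xi)$ reverses the flow, so uniqueness of integral curves forces $\Pi_x\gamma$ to be symmetric about both $t=0$ and $t=T$, hence $2T$-periodic and in particular bounded, contradicting non-trapping; thus ${\mathcal A}\cap\partial{\mathcal M}={\mathcal T}_1$, and since non-trapping also makes each of these finitely many trajectories leave every compact set in finite time, ${\mathcal A}$ is closed, so your compact set $K$ lies at positive distance from ${\mathcal A}$ and (\ref{estumer2}) applies as you intended.
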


\section{Completion of the proof}

From this point, the proof proceeds exactly as in \cite{Ma1}. More precisely, if $\widetilde P=-h^2\Delta + V$ is the operator defined as in (\ref{merbouche}), with $\widetilde V =V$ near $\ddot O\backslash{\mathcal V}'$ (where ${\mathcal M}\subset\subset{\mathcal V}' \subset\subset{\mathcal V}$), we already know (see \cite{HeSj2}, Theorem 9.9) that the difference $u-u_0$ satisfies,
$$
\| u-u_0\|_{L^2(\ddot O\backslash{\mathcal V}')} ={\mathcal O}(e^{-(S_0+\varepsilon_3/h}),
$$
for some constant $\varepsilon_3>0$. Then, applying Proposition \ref{propintM}, we deduce,
$$
\| u_0\|_{L^2({\mathcal V}\backslash{\mathcal V}')}={\mathcal O}(e^{-(d_V(U,x)+\varepsilon_3)/h}),
$$
with $\varepsilon_4 >0$. At this point, we are in a situation absolutely similar to that of \cite{Ma1}. In particular, the previous estimate can be propagated up to the well $U$ along any minimal geodesic $\gamma\in G$, and as in \cite{Ma1}, Section 6, we obtain that for all $x_1\in \bigcup_{\gamma\in G}(\gamma \cap \partial U)$, one has,
$$
(x_1,0)\notin MS (u_0),
$$
where $MS(u)$ stands for the microsupport of $u$ as defined, e.g., in \cite{Ma2} (it was called $FS_a(u)$ in \cite{Ma1}). Since, in addition, $MS(u)\subset p^{-1}(0)$, and,
$$
(\partial U\times \R^n)\,\cap\, p^{-1}(0) \subset \{ \xi =0\},
$$
we deduce,
$$
MS(u)\, \cap \, \left( \bigcup_{\gamma\in G}(\gamma \cap \partial U)\times \R^n\right) =\emptyset,
$$
and thus by standard properties of $MS(u)$ (see, e.g., \cite{Ma2}), we conclude to the existence of a neighborhood $W$ of $\bigcup_{\gamma\in G}(\gamma \cap \partial U)$ and of a positive constant $\varepsilon_5>0$, such that,
$$
\| u\|_{L^2(W)}={\mathcal O}(e^{-\varepsilon_5/h}),
$$
uniformly for $h>0$ small enough. But this is in contradiction with Assumption [ND], and the proof of Theorem \ref{mainth} is complete.



{}

\end{document}